\newtheorem{lemma}{Lemma}
\newtheorem{proposition}{Proposition}
\newtheorem{definition}{Definition}
\newenvironment{proof}{\noindent{\sf Proof.}}{\hfill $\boxtimes\hspace{2mm}$\linebreak}
\newcommand{\qed}{\hfill $\boxtimes\hspace{2mm}$ \linebreak}
\begin{document}

\title{Information Flow in Strategic Games}
\title{Functional Dependence in Strategic Games \\ (extended abstract)}
\author{ Kristine Harjes and Pavel Naumov\\ \\ \small Department of Mathematics and Computer Science\\ \small McDaniel College, 
Westminster, Maryland, USA \\ \small
              {\sf \{keh013,pnaumov\}@mcdaniel.edu}     }

\def\titlerunning{Functional Dependence in Strategic Games}
\def\authorrunning{Kristine Harjes and Pavel Naumov}

\maketitle

\begin{abstract}
The paper studies properties of functional dependencies between strategies of players in Nash equilibria of multi-player strategic games. The main focus is on the properties of functional dependencies in the context of a fixed dependency graph for pay-off functions. A logical system describing properties of functional dependence for any given graph is proposed and is  proven to be complete.
\end{abstract}

%\category{I.2.11}{Distributed Artificial Intelligence}{Multiagent systems}
%\category{F.4.1}{Mathematical Logic}{}

%\keywords{strategic games, functional dependency, axiomatization, completeness}

%\terms{Theory}

\section{Introduction}

\vspace{2mm}
\noindent {\bf Functional Dependence.}
In this paper we study dependency between players' strategies in Nash equilibria. For example, the coordination game described by Table~1 has two Nash equilibria: $(a_1,b_1)$ and $(a_2,b_2)$. Knowing the strategy of player $a$ in  a Nash equilibrium of this game, one can predict the strategy of player $b$. We say that player $a$ functionally determines player $b$ and denote this by $a\rhd b$. 

\begin{wraptable}{l}{0.3\textwidth}
\begin{center}
\vspace{0mm}
\begin{tabular}{l|c|c|c}
%\aline
		& $b_1$ 	& $b_2$\\ \hline
$a_1$	& 1,1	& 0,0\\ %\hline
$a_2$	& 0,0	& 1,1\\ %\hline
\end{tabular}
\caption{Coordination Game}
\end{center}
\label{coordination game}
\vspace{2mm}
\end{wraptable}

Note that in the case of the coordination game, we also have $b\rhd a$. However, for the game described by Table~\ref{asymmetric game} statement $a\rhd b$ is true, but $b\rhd a$ is false.

\begin{wraptable}{r}{0.3\textwidth}
\vspace{-7mm}
\begin{center}
\begin{tabular}{l|c|c|c}
%\aline
		& $b_1$ 	& $b_2$\\ \hline
$a_1$	& 1,1	& 0,0\\ %\hline
$a_2$	& 0,0	& 1,1\\ %\hline
$a_3$	& 1,1	& 0,0\\ %\hline
\end{tabular}
\end{center}
\caption{Strategic Game}
\label{asymmetric game}
\vspace{-2mm}
\end{wraptable}

The main focus of this paper is functional dependence in multiplayer games. For example, consider a ``parity" game with three players $a$, $b$, $c$. Each of the players picks 0 or 1, and all players are rewarded if the sum of all three numbers is even. This game has four different Nash equilibria: $(0,0,0)$, $(0,1,1)$, $(1,0,1)$, and $(1,1,0)$. It is easy to see that knowledge of any two players' strategies in a Nash equilibrium reveals the third. Thus, using our notation, for example $a,b\rhd c$. At the same time, $\neg(a\rhd c)$. 

As another example, consider a game between three players in which each player  picks 0 or 1 and all players are rewarded if they have chosen the same strategy. This game has only two Nash equilibria: $(0,0,0)$ and $(1,1,1)$. Thus, knowledge of the strategy of player $a$ in a Nash equilibrium reveals the strategies of the two other players. We write this as $a\rhd b,c$.

Functional dependence as a relation has been studied previously, especially in the context of database theory. Armstrong~\cite{a74} presented the following sound and complete axiomatization of this relation:
\begin{enumerate}
\item 
{\em Reflexivity}: $A\rhd B$, if $B\subseteq A$,
\item 
{\em Augmentation}: $A\rhd B \rightarrow A,C\rhd B,C$,
\item 
{\em Transitivity}: $A\rhd B \rightarrow (B\rhd C \rightarrow A\rhd C)$,
\end{enumerate}
where here and everywhere below $A,B$ denotes the union of sets $A$ and $B$. The above axioms are known in database literature as Armstrong's axioms~\cite{guw09}. 
Beeri, Fagin, and Howard~\cite{bfh77} suggested a variation of Armstrong's axioms that describe properties of multi-valued dependence. 

\vspace{2mm}
\noindent {\bf Dependency Graphs.}
As a side result, we will show that the logical system formed by the Armstrong axioms is sound and complete with respect to the strategic game semantics. Our main result, however, is a sound and complete axiomatic system for the relation $\rhd$ in games with a given dependency graph.

Dependency graphs~\cite{kls01uai, lks01nips, egg07ec,egg06eccc} put restrictions on the pay-off functions that can be used in the game. For example, dependency graph $\Gamma_1$ depicted in Figure~\ref{intro_alpha}, specifies that the pay-off function of player $a$ only can depend on the strategy of player $b$ in addition to the strategy of player $a$ himself.  The pay-off function for player $b$ can only depend on the strategies of players $a$ and $c$  in addition to the strategy of player $b$ himself, etc. 

\begin{wrapfigure}{l}{0.35\textwidth}
\begin{center}
\vspace{-5mm}
\scalebox{.5}{\includegraphics{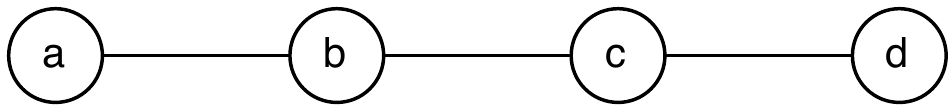}}
\vspace{0mm}
\footnotesize\caption{Dependency Graph $\Gamma_1$}\label{intro_alpha}
\vspace{-5mm}
\end{center}
\vspace{0cm}
\end{wrapfigure}

An example of a game over graph $\Gamma_1$ is a game between players $a$, $b$, $c$, and $d$ in which these players choose real numbers as their strategies. The pay-off function of players $a$ and $d$ is the constant 0. Player $b$ is rewarded if his value is equal to the mean of the values of players $a$ and $c$. Player $c$ is rewarded if his value is equal to the mean of the values of players $b$ and $d$. Thus, Nash equilibria of this game are all quadruples $(a,b,c,d)$ such that $2b=a+c$ and $2c=b+d$. Hence, in this game $a,b\rhd c,d$ and $a,c\rhd b,d$, but $\neg(a\rhd b)$.

Note that although the statement $a,b\rhd c,d$ is true for the game described above, it is not true for many other games with the same dependency graph $\Gamma_1$. In this paper we study properties of functional dependence that are common to all games with the same dependency graph. An example of such statement for the graph $\Gamma_1$, as we will show in Proposition~\ref{XYZ}, is $a\rhd d \rightarrow b,c\rhd d$. 

\begin{wrapfigure}{r}{0.35\textwidth}
\begin{center}
\vspace{0mm}
\scalebox{.5}{\includegraphics{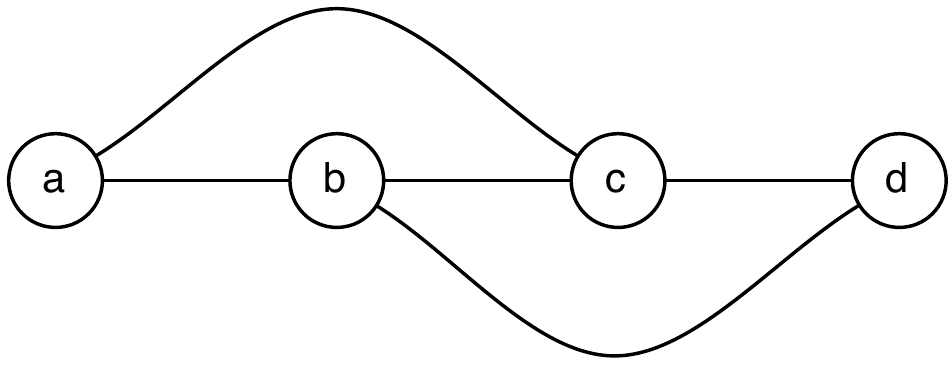}}
\vspace{0mm}
\footnotesize\caption{Dependency Graph $\Gamma_2$}\label{intro_beta}
\vspace{-5mm}
\end{center}
\vspace{0cm}
\end{wrapfigure}

Informally, this property is true for any game over graph $\Gamma_1$ because any dependencies between players $a$ and $d$ must be established through players $b$ and $c$. This intuitive approach, however, does not always lead to the right conclusion. For example, in graph $\Gamma_2$ depicted in Figure~\ref{intro_beta}, players $b$ and $c$ also separate players $a$ and $d$. Thus, according to the same intuition, the statement $a\rhd d \rightarrow b,c\rhd d$ must also be true for any game over graph $\Gamma_2$. This, however, is not true. Consider, for example, a game in which all four players have three strategies: {\em rock}, {\em paper}, and {\em scissors}. The pay-off function of players $a$ and $d$ is the constant 0. If $a$ and $d$ pick the same strategy, then neither $b$ nor $c$ is paid. If players $a$ and $d$ pick different strategies, then players $b$ and $c$ are paid according to the rules of the standard rock-paper-scissors game. In this game Nash equilibrium is only possible if $a$ and $d$ pick the same strategy. Hence, $a\rhd d$. At the same time, in any such equilibria $b$ and $c$ can have any possible combination of values. Thus, $\neg (b,c\rhd d)$. Therefore, the statement $a\rhd d \rightarrow b,c\rhd d$ is not true for this game.

\begin{wrapfigure}{l}{0.35\textwidth}
\begin{center}
\vspace{-5mm}
\scalebox{.5}{\includegraphics{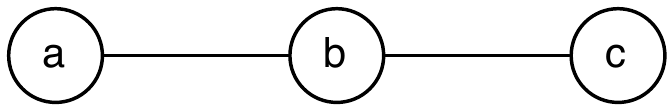}}
\vspace{0mm}
\footnotesize\caption{Dependency Graph $\Gamma_3$}\label{intro_gamma}
\vspace{-5mm}
\end{center}
\vspace{0cm}
\end{wrapfigure}

As our final example, consider the graph $\Gamma_3$ depicted in Figure~\ref{intro_gamma}. We will show that $a\rhd c\rightarrow b\rhd c$ is not true for at least one game over graph $\Gamma_3$. Indeed, consider the game in which players $a,b$, and  $c$ use real numbers as possible strategies. Players $a$ and $c$ have a constant pay-off of 0. The pay-off of the player $b$ is equal to $0$ if players $a$ and $c$ choose the same real number. Otherwise, it is equal to the number chosen by the player $b$ himself. Note that in any Nash equilibrium of this game, the strategies of players $a$ and $c$ are equal. Therefore, $a\rhd c$, but $\neg(b\rhd c)$.

The main result of this paper is a sound and complete axiomatization of all properties of functional dependence for any given dependency graph. This result is closely related to work by More and Naumov on functional dependence of secrets over hypergraphs~\cite{mn11clima}. However, the logical system presented in this paper is significantly different from theirs. A similar relation of ``rational" functional dependence without any connection to dependency graphs has been axiomatized by Naumov and Nicholls~\cite{nn12loft}. 

The counterexample that we have constructed for the game in Figure~\ref{intro_gamma} significantly relies  on the fact that  player $b$ has infinitely many strategies. However, in this paper we
 show completeness with respect to the semantics of finite games, making the result stronger.

\section{Syntax and Semantics}

The graphs that we consider in this paper contain no loops, multiple edges, or directed edges.
\begin{definition}\label{border}
For any set of vertices $U$ of a graph $(V,E)$, border ${\cal B}(U)$ is the set 
$$\{v\in U\;|\; \mbox{$(v,w)\in E$ for some $w\in V\setminus U$}\}.$$
\end{definition}
A cut $(U,W)$ of a graph $(V,E)$ is a partition $U\sqcup W$ of the set $V$. For any  vertex $v$ in a graph, by $Adj(v)$ we mean the set of all vertices adjacent to $v$. By $Adj^+(v)$ we mean the set $Adj(v)\cup\{v\}$.

\begin{definition}\label{formula}
For any graph $\Gamma=(V,E)$, by $\Phi(\Gamma)$ we mean the minimal set of formulas such that
(i) $\bot\in \Phi(\Gamma)$,
(ii) $A\rhd B\in \Phi(\Gamma)$ for each $A\subseteq V$ and $B\subseteq V$,
(iii) $\phi\rightarrow\psi\in\Phi(\Gamma)$ for each $\phi,\psi\in\Phi(\Gamma)$.
\end{definition}

\begin{definition}\label{}
By game over graph $\Gamma=(V,E)$ we mean any strategic game $G=(V,\{S_v\}_{v\in V},\{u_v\}_{v\in V})$ such that
(i) The finite set of players in the game is the set of vertices $V$,
(ii) The finite set of strategies $S_v$ of any player $v$ is an arbitrary set,
(iii) The pay-off function $u_v$ of any player $v$ only depends on the strategies of the players in $Adj^+(v)$.
\end{definition}
\noindent By $NE(G)$ we denote the set of all Nash equilibria in the game $G$. 
The next definition is the core definition of this paper. The second item in the list below gives a precise meaning of the functional dependence predicate $A\rhd B$.

\begin{definition}\label{true}
For any game $G$ over graph $\Gamma$ and any $\phi\in\Phi(\Gamma)$, we define binary relation $G\vDash \phi$ as follows
(i) $G\nvDash\bot$,
(ii) $G\vDash A\rhd B$ if ${\mathbf s}=_A{\mathbf t}$ implies ${\mathbf s}=_B {\mathbf t}$ for each ${\mathbf s},{\mathbf t}\in NE(G)$,
(iii) $G\vDash\psi_1\rightarrow\psi_2$ if $G\nvDash\psi_1$ or $G\vDash\psi_2$,
where here and everywhere below $\langle s_v\rangle_{v\in V}=_X \langle t_v\rangle_{v\in V}$ means that $s_x=t_x$ for each $x\in X$.
\end{definition}

\section{Axioms}
The following is the set of axioms of our logical system. It consists of the original Armstrong axioms and an additional Contiguity axiom that captures properties of functional dependence specific to a given graph $\Gamma$.

\begin{enumerate}
\item Reflexivity: $A\rhd B$, where $B\subseteq A$
\item Augmentation: $A\rhd B\rightarrow A,C\rhd B,C$
\item Transitivity: $A\rhd B \rightarrow (B\rhd C \rightarrow A\rhd C)$
\item Contiguity: $A,B\rhd C\rightarrow {\cal B}(U),{\cal B}(W),B\rhd C$, where $(U,W)$ is a cut of the graph such that $A\subseteq U$ and $C\subseteq W$.
\end{enumerate}
Note that the Contiguity axiom, unlike the Gateway axiom~\cite{mn11clima}, effectively requires ``double layer" divider ${\cal B}(U),{\cal B}(W)$ between sets $A$ and $C$. This is because in our setting values are assigned to the vertices and  not to the edges of the graph.

We write $\vdash_\Gamma\phi$ if $\phi\in \Phi(\Gamma)$ is provable from the combination of the axioms above and propositional tautologies in the language $\Phi(\Gamma)$ using the Modus Ponens inference rule. We write $X \vdash_\Gamma\phi$ if $\phi$ is provable using the additional set of axioms $X$. We often omit the parameter $\Gamma$ when its value is clear from the context.

%\begin{lemma}\label{border union}
%${\cal B}(X\cup Y)\subseteq {\cal B}(X)\cup {\cal B}(Y)$.
%\end{lemma}
%\begin{proof}
%Let $v\in {\cal B}(X\cup Y)$. Thus, $v\in X\cup Y$ and there is $w\notin X\cup Y$ such that $(v,w)\in E$. Without loss of generality, assume that $v\in X$. Hence, $v\in X$ and $w\notin X$. Therefore, $v\in {\cal B}(X)$. 
%\end{proof}

\begin{lemma}\label{left mono}
$\vdash A\rhd C \rightarrow A,B \rhd C.$
\end{lemma}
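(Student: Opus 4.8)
The plan is to derive the implication from the Reflexivity and Transitivity axioms alone; neither Augmentation nor the Contiguity axiom is needed. First I would instantiate Reflexivity to obtain $A,B\rhd A$: this is a legitimate instance since $A\subseteq A\cup B$, which is exactly what the side condition $B\subseteq A$ of the Reflexivity axiom demands once we relabel. Second, I would take the Transitivity axiom in the particular instance in which the ``middle'' set is $A$, the left set is $A,B$, and the right set is $C$, i.e.
$$A,B\rhd A\;\rightarrow\;\bigl(A\rhd C\;\rightarrow\;A,B\rhd C\bigr).$$

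The conclusion then follows by pure propositional reasoning together with Modus Ponens. Applying Modus Ponens to the Reflexivity instance $A,B\rhd A$ and the Transitivity instance above discharges the first antecedent and leaves $A\rhd C\rightarrow A,B\rhd C$, which is precisely the statement of the lemma. Since propositional tautologies are among our axioms, no further justification of the implicational bookkeeping is required.

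There is essentially no real obstacle here; the derivation is three lines. The only points that require a little care are the direction in which Transitivity is instantiated — the ``new'' set $B$ is adjoined on the far left, producing $A,B$, rather than being inserted in the middle — and the matching of the Reflexivity side condition, where one must read $A\subseteq A,B$ as the hypothesis $B\subseteq A$ of the axiom under the appropriate relabelling. (One could alternatively route the argument through Augmentation, deriving $A,B\rhd C,B$ and then trimming $B$ off the right-hand side via Reflexivity and Transitivity, but this is strictly longer, so I would not take that path.)
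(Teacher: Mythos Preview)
Your proposal is correct and coincides with the paper's proof: both obtain $A,B\rhd A$ from Reflexivity and then apply Transitivity with middle set $A$. The only cosmetic difference is that the paper assumes $A\rhd C$ and derives $A,B\rhd C$ (implicitly invoking the deduction theorem), whereas you derive the implication directly via Modus Ponens.
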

\begin{proof}
Assume $A\rhd C$. By the Reflexivity axiom, $A,B\rhd A$. Thus, by the Transitivity axiom, $A,B\rhd C$. 
\end{proof}

%\begin{lemma}\label{right mono}
%$\vdash A\rhd B, C \rightarrow A \rhd B.$
%\end{lemma}
%\begin{proof}
%Assume $A\rhd B, C$. By the Reflexivity axiom, $B,C\rhd B$. Thus, by the Transitivity axiom, $A\rhd B$. 
%\end{proof}

\section{Examples}

In this section we give examples of proofs in our formal system. The soundness and the completeness of this system will be shown in the appendix.

\begin{proposition}\label{XYZ}
$\vdash_{\Gamma_1} a\rhd d\rightarrow b,c\rhd d$, where $\Gamma_1$ is the graph depicted in Figure~\ref{intro_alpha}.
\end{proposition}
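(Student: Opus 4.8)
The plan is to obtain the implication in essentially one step from the Contiguity axiom, once the correct cut is chosen. From the description accompanying Figure~\ref{intro_alpha}, $\Gamma_1$ is the path on $a,b,c,d$ whose only edges are $\{a,b\}$, $\{b,c\}$, and $\{c,d\}$ (this is exactly what makes its Nash equilibria the quadruples with $2b=a+c$ and $2c=b+d$). So I would take the cut $(U,W)$ with $U=\{a,b\}$ and $W=\{c,d\}$; this is a genuine partition of the vertex set, and with $A=\{a\}$ and $C=\{d\}$ we have $A\subseteq U$ and $C\subseteq W$, so the cut is admissible for the Contiguity axiom.

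Next I would read off the two borders. Inside $U=\{a,b\}$, vertex $a$ has only the neighbour $b\in U$, while $b$ has the neighbour $c\in W$, so ${\cal B}(U)=\{b\}$. Symmetrically, inside $W=\{c,d\}$, vertex $d$ has only the neighbour $c\in W$, while $c$ has the neighbour $b\in U$, so ${\cal B}(W)=\{c\}$. Taking $B=\emptyset$, the instance $A,B\rhd C\rightarrow {\cal B}(U),{\cal B}(W),B\rhd C$ of Contiguity becomes precisely $a\rhd d\rightarrow b,c\rhd d$, which is the formula to be proved.

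I do not expect any real obstacle: the entire content of the argument is recognizing the path structure of $\Gamma_1$ and verifying that the single cut $(\{a,b\},\{c,d\})$ separates $a$ from $d$ with $b$ and $c$ serving as the two border layers. Everything else is a direct substitution into the axiom, so the proof is a one-liner.
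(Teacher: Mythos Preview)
Your proposal is correct and follows essentially the same approach as the paper: choose the cut $U=\{a,b\}$, $W=\{c,d\}$, read off ${\cal B}(U)=\{b\}$ and ${\cal B}(W)=\{c\}$, and apply the Contiguity axiom with $A=\{a\}$, $B=\varnothing$, $C=\{d\}$. The paper's proof is the same one-liner, only slightly terser in omitting the explicit border computations.
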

\begin{proof}
Consider cut $(U,W)$ of the graph $\Gamma_1$ such that $U=\{a,b\}$ and $W=\{c,d\}$. Thus, ${\cal B}(U)=\{b\}$ and ${\cal B}(W)=\{c\}$. Therefore, by the Contiguity axiom, $a\rhd d\rightarrow b,c\rhd d$.
\end{proof}

\begin{proposition}\label{prop2}
$\vdash_{\Gamma_1} a,c\rhd d\rightarrow (d,b\rhd a\rightarrow b,c\rhd a,d)$, where $\Gamma_1$ is the graph depicted in Figure~\ref{intro_alpha}.
\end{proposition}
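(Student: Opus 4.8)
The plan is to assume the two antecedents $a,c\rhd d$ and $d,b\rhd a$ and then derive $b,c\rhd a,d$ using the Contiguity axiom exactly once, followed by pure Armstrong-style manipulation. The one graph-specific move is to apply Contiguity to the first hypothesis. Recall that in $\Gamma_1$ the vertices form the path $a - b - c - d$. Take the cut $(U,W)$ with $U=\{a,b\}$ and $W=\{c,d\}$, so that ${\cal B}(U)=\{b\}$ and ${\cal B}(W)=\{c\}$, and read $a,c\rhd d$ as $A,B\rhd C$ with $A=\{a\}\subseteq U$, $B=\{c\}$, and $C=\{d\}\subseteq W$. Then the Contiguity axiom yields ${\cal B}(U),{\cal B}(W),B\rhd C$, that is, $b,c\rhd d$. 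Intuitively this is the step that trades the ``far'' vertex $a$ for the separating pair $b,c$, while keeping $c$ (the set $B$) on the left so that the conclusion is genuinely $b,c\rhd d$ and not merely $b\rhd d$.

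Once $b,c\rhd d$ is in hand, the rest is routine. First I would use the Augmentation axiom to enlarge $b,c\rhd d$ by $\{b\}$, obtaining $b,c\rhd b,d$. Independently, I would use Augmentation on the second hypothesis $d,b\rhd a$, enlarging it by $\{d\}$ to get $b,d\rhd a,d$. Chaining $b,c\rhd b,d$ and $b,d\rhd a,d$ through the Transitivity axiom produces $b,c\rhd a,d$, which is exactly the desired conclusion; discharging the two assumptions by propositional reasoning and Modus Ponens finishes the derivation.

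The main obstacle is really just spotting the correct cut to feed into the Contiguity axiom; after that the argument is a two- or three-line calculation with Augmentation and Transitivity (Lemma~\ref{left mono} is not even needed here). The only thing to watch is the bookkeeping of which set plays the role of $B$ in Contiguity: it must be $\{c\}$, and carrying it along is what makes the intermediate conclusion strong enough to transit, via the second hypothesis, all the way to $a,d$ on the right.
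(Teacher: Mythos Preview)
Your proof is correct. The choice of cut and the first Contiguity application (turning $a,c\rhd d$ into $b,c\rhd d$) are identical to the paper's. The difference lies in how the second hypothesis is used: the paper applies Contiguity a second time, this time to $d,b\rhd a$ with the mirror cut $U'=\{c,d\}$, $W'=\{a,b\}$, to obtain $b,c\rhd a$, and only then combines the two conclusions via Augmentation and Transitivity. You instead keep $d,b\rhd a$ untouched, augment $b,c\rhd d$ to $b,c\rhd b,d$, augment $d,b\rhd a$ to $b,d\rhd a,d$, and transit directly. Your route is slightly more economical---one graph-specific step rather than two---at the cost of not exhibiting the symmetric statement $b,c\rhd a$ explicitly; the paper's version makes that symmetry visible and ties in more directly with the ``general principle'' about sparse sets discussed after Proposition~\ref{prop4}.
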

\begin{proof}
Assume that $a,c\rhd d$ and $d,b\rhd a$.
Consider cut $(U,W)$ of the graph $\Gamma_1$ such that $U=\{a,b\}$ and $W=\{c,d\}$. Thus, ${\cal B}(U)=\{b\}$ and ${\cal B}(W)=\{c\}$. Therefore,  by the Contiguity axiom with $A=\{a\}$, $B=\{c\}$, and $C=\{d\}$, $a,c\rhd d\rightarrow b,c\rhd d$. Thus, 
\begin{equation}\label{eq1}
 b,c\rhd d.
\end{equation}
 by the first assumption. Similarly, using the second assumption, $b,c\rhd a$. Hence, by the Augmentation axiom, 
\begin{equation}\label{eq2}
b,c\rhd a,b,c.
\end{equation}
Thus, from statement (\ref{eq1}) by the Augmentation axiom,
$a,b,c\rhd a,d$. Finally, using statement~(\ref{eq2}) and the Transitivity axiom,
$b,c\rhd a,d$.
\end{proof}

\begin{proposition}\label{prop3}
$\vdash_{\Gamma_4} a,c\rhd e\rightarrow b,c,d\rhd e$, where $\Gamma_4$ is the graph depicted in Figure~\ref{example_delta}.
\end{proposition}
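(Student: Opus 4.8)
The plan is to follow the template of Propositions~\ref{XYZ} and~\ref{prop2}: assume the antecedent $a,c\rhd e$, apply the Contiguity axiom once to replace $a$ by a separating ``double layer'' lying inside $\{b,c,d\}$, and, if necessary, finish with one application of Lemma~\ref{left mono} to pad the antecedent up to $\{b,c,d\}$.

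Concretely, I would read off from Figure~\ref{example_delta} a cut $(U,W)$ of $\Gamma_4$ with $a\in U$, $e\in W$, and ${\cal B}(U)\cup{\cal B}(W)\subseteq\{b,c,d\}$; the natural choice is to place $a$ --- together with whichever of $b,c,d$ sit on $a$'s side --- in $U$, and to place $e$ together with the remaining separator vertices in $W$, so that every edge crossing the cut has both endpoints among $b,c,d$. Applying the Contiguity axiom with $A=\{a\}$, $B=\{c\}$, $C=\{e\}$ (legitimate since $A\subseteq U$ and $C\subseteq W$) yields the implication $a,c\rhd e\rightarrow {\cal B}(U),{\cal B}(W),c\rhd e$. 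Modus Ponens with the assumed antecedent then gives ${\cal B}(U),{\cal B}(W),c\rhd e$, a statement of the form $X\rhd e$ with $\{c\}\subseteq X\subseteq\{b,c,d\}$. If $X=\{b,c,d\}$ we are already done; otherwise Lemma~\ref{left mono} supplies the missing elements and produces $b,c,d\rhd e$. Discharging the assumption proves the proposition.

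The one genuine obstacle is combinatorial rather than deductive: verifying that the chosen cut really does have both of its borders inside $\{b,c,d\}$ --- equivalently, that no edge of $\Gamma_4$ joins $a$, or any vertex kept on $a$'s side outside $\{b,c,d\}$, directly to a vertex on $e$'s side, so that $b,c,d$ genuinely form the double-layer divider demanded by the Contiguity axiom. Everything after that is the same routine bookkeeping with Contiguity, Modus Ponens, and Lemma~\ref{left mono} already used in the two preceding propositions.
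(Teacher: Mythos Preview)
Your approach is correct and essentially identical to the paper's. The paper chooses the specific cut $U=\{a,b,c\}$, $W=\{d,e\}$, for which ${\cal B}(U)=\{b,c\}$ and ${\cal B}(W)=\{d\}$; with $A=\{a\}$, $B=\{c\}$, $C=\{e\}$ the Contiguity instance is literally $a,c\rhd e\rightarrow b,c,d\rhd e$, so no padding via Lemma~\ref{left mono} and no assumption--discharge wrapper is needed.
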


\begin{wrapfigure}{l}{0.5\textwidth}
\begin{center}
\vspace{-5mm}
\scalebox{.5}{\includegraphics{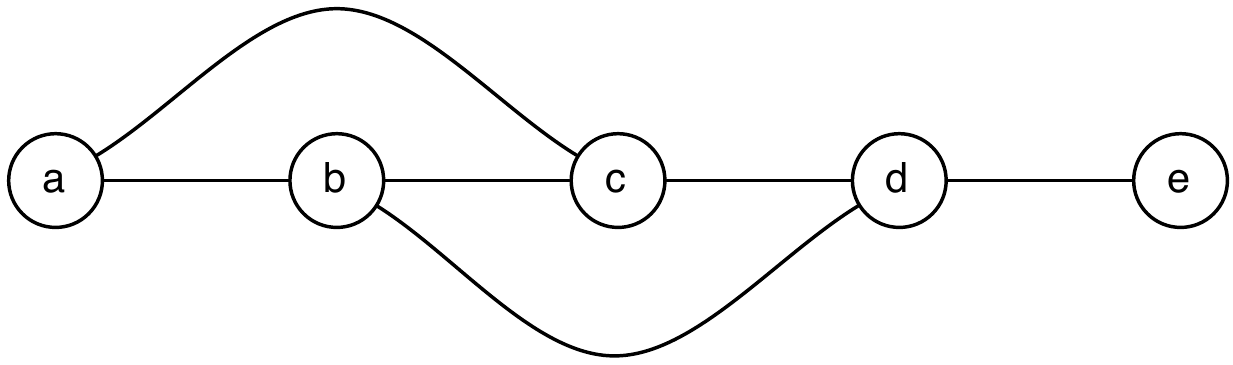}}
\vspace{-1mm}
\footnotesize\captionof{figure}{Dependency Graph $\Gamma_4$}\label{example_delta}
\vspace{-7mm}
\end{center}
\end{wrapfigure}
%\begin{proof}
\noindent{\sf Proof.}
Consider cut $(U,W)$ of the graph $\Gamma_4$ such that $U=\{a,b,c\}$ and $W=\{d,e\}$. Thus, ${\cal B}(U)=\{b,c\}$ and ${\cal B}(W)=\{d\}$. Therefore, $a,c\rhd e\rightarrow b,c,d\rhd e$ by the Contiguity axiom with $A=\{a\}$, $B=\{c\}$, and $C=\{e\}$.
\qed
%\end{proof}

\pagebreak

\begin{proposition}\label{prop4}
$\vdash_{\Gamma_5} a\rhd b\rightarrow (b\rhd c\rightarrow (c\rhd a \rightarrow d,e,f\rhd a,b,c))$, where
$\Gamma_5$ is depicted in Figure~\ref{example_epsilon}.
 
\end{proposition}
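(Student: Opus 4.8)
The plan is to peel the three hypotheses off one at a time with the Contiguity axiom, each time using the fact that in $\Gamma_5$ the layer $\{d,e,f\}$ insulates each of $a$, $b$, $c$ from the other two, and then to glue the three resulting dependences together with Armstrong's axioms.

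First I would assume $a\rhd b$, $b\rhd c$, and $c\rhd a$. For the first hypothesis I would read off from Figure~\ref{example_epsilon} a cut $(U,W)$ of $\Gamma_5$ with $a\in U$ and $b\in W$ whose combined border ${\cal B}(U)\cup{\cal B}(W)$ lies inside $\{d,e,f\}$ — intuitively the cut runs entirely ``inside'' the $d,e,f$ layer, which is possible precisely because deleting (the relevant part of) $\{d,e,f\}$ disconnects $a$ from $b$ in the graph. Since $\{a\}\subseteq U$ and $\{b\}\subseteq W$, the Contiguity axiom instantiated with $A=\{a\}$, $B=\emptyset$, $C=\{b\}$ gives $a\rhd b\rightarrow {\cal B}(U),{\cal B}(W)\rhd b$; Modus Ponens yields ${\cal B}(U),{\cal B}(W)\rhd b$, and Lemma~\ref{left mono} then gives $d,e,f\rhd b$. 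Rotating the same argument through the cyclic symmetry of the hypotheses — a cut separating $b$ from $c$, and one separating $c$ from $a$ — converts the remaining two hypotheses into $d,e,f\rhd c$ and $d,e,f\rhd a$.

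It then remains to obtain $d,e,f\rhd a,b,c$ from $d,e,f\rhd a$, $d,e,f\rhd b$, and $d,e,f\rhd c$. For this I would first record the standard Armstrong consequence that $X\rhd Y$ and $X\rhd Z$ together yield $X\rhd Y,Z$: augment $X\rhd Y$ by $X$ to get $X\rhd X,Y$, augment $X\rhd Z$ by $Y$ to get $X,Y\rhd Y,Z$, and apply Transitivity. Using this twice with $X=\{d,e,f\}$ collapses the three dependences into $d,e,f\rhd a,b,c$, which discharges the assumptions and finishes the argument. The only genuine obstacle is the first step — exhibiting, for each hypothesis, a cut of $\Gamma_5$ whose double-layer border is contained in $\{d,e,f\}$ while still separating the relevant vertex among $a,b,c$ from the other two; this is exactly the configuration the Contiguity axiom is built for, and everything after it is routine bookkeeping with Reflexivity, Augmentation, and Transitivity.
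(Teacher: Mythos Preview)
Your argument is correct and follows essentially the same route as the paper: apply Contiguity once per hypothesis, using a cut whose double border lies in $\{d,e,f\}$, to obtain $d,e,f\rhd a$, $d,e,f\rhd b$, $d,e,f\rhd c$, and then combine these via Augmentation and Transitivity. The only cosmetic differences are that the paper names the cuts explicitly (e.g.\ $U=\{c,f\}$, $W=\{a,b,d,e\}$, giving ${\cal B}(U)=\{f\}$, ${\cal B}(W)=\{d,e\}$) rather than appealing to the structure of the $\{d,e,f\}$ layer, and it chains the three conclusions slightly differently in the final Armstrong bookkeeping; neither difference is substantive.
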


%\begin{proof}
\noindent{\sf Proof.}
Assume $a\rhd b$, $b\rhd c$, and $c\rhd a$.
Consider cut $(U,W)$ of the graph $\Gamma_5$ such that $U=\{c,f\}$ and $W=\{a,b,d,e\}$. Thus, ${\cal B}(U)=\{f\}$ and ${\cal B}(W)=\{d,e\}$. Therefore, by the Contiguity axiom with $A=\{c\}$, $B=\varnothing$, and $C=\{a\}$, $c\rhd a\rightarrow d,e,f\rhd a$. Hence, $d,e,f\rhd a$ by the third assumption. Similarly, one can show $d,e,f\rhd b$, and $d,e,f\rhd c$. By applying the Augmentation axiom to the last three statements,
$
d,e,f\rhd a,d,e,f,
$
and
$
a,d,e,f\rhd a,b,d,e,f,
$
and
$
a,b,d,e,f\rhd a,b,c.
$
Therefore, $d,e,f\rhd a,b,c$ by the Transitivity axiom applied twice.
\qed
%\end{proof}

\begin{wrapfigure}{r}{0.35\textwidth}
\begin{center}
\vspace{-2mm}
\scalebox{.5}{\includegraphics{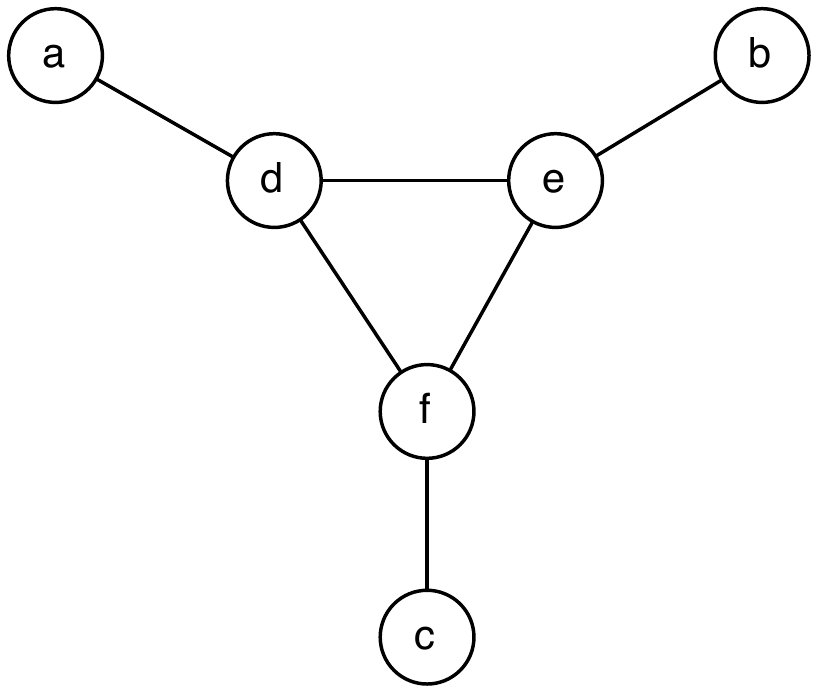}}
\vspace{0mm}
\footnotesize\caption{Dependency Graph $\Gamma_5$}\label{example_epsilon}
\vspace{-7mm}
\end{center}
\vspace{0cm}
\end{wrapfigure}

Proposition~\ref{prop2} and Proposition~\ref{prop4} are special cases of a more general principle. We will say that a subset of vertices is {\em sparse} if the shortest path between any two vertices in this subset contains at least three edges. The general principle states that if $W$ is a sparse subset of vertices in the graph $(V,E)$ and each vertex $w\in W$ is functionally determined by the set $V\setminus\{w\}$, then the subset $V\setminus W$ functionally determines the subset $W$:    
$$
\bigwedge_{w\in W}\left((V\setminus \{w\})\right)\rhd w \rightarrow (V\setminus W)\rhd W.
$$
For example, the set $\{a,d\}$ in the graph $\Gamma_1$ depicted in Figure~\ref{intro_alpha} is sparse. Due to the general principle, 
$
a,b,c\rhd d \rightarrow(d,c,b\rhd a \rightarrow b,c\rhd a,d).
$
Thus, by Lemma~\ref{left mono},
$
a,c\rhd d \rightarrow(d,b\rhd a \rightarrow b,c\rhd a,d),
$
which is the statement of Proposition~\ref{prop2}. In the case of Proposition~\ref{prop4}, the sparse set is $\{a,b,c\}$. The proof of the general principle is similar to the proof of Proposition~\ref{prop4}.

\section{Soundness}

In this section, we prove soundness of our logical system by proving soundness of each of our four axioms. The proof of completeness can be found in~\cite{hn13arxiv}.

\begin{lemma}[reflexivity]\label{} 
$G\vDash A\rhd B$ for each game $G$ over a graph $\Gamma=(V,E)$
and each $B\subseteq A\subseteq V$.
\end{lemma}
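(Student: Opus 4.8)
The statement to prove is the soundness of the Reflexivity axiom: for each game $G$ over a graph $\Gamma=(V,E)$ and each $B\subseteq A\subseteq V$, we have $G\vDash A\rhd B$.

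The plan is to unfold Definition~\ref{true}(ii) directly. By that definition, $G\vDash A\rhd B$ means precisely that for all Nash equilibria $\mathbf{s},\mathbf{t}\in NE(G)$, if $\mathbf{s}=_A\mathbf{t}$ then $\mathbf{s}=_B\mathbf{t}$. So I would fix two arbitrary Nash equilibria $\mathbf{s}=\langle s_v\rangle_{v\in V}$ and $\mathbf{t}=\langle t_v\rangle_{v\in V}$ in $NE(G)$ and assume $\mathbf{s}=_A\mathbf{t}$, i.e., $s_x=t_x$ for every $x\in A$.

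The next step is to observe that since $B\subseteq A$, every $x\in B$ is also in $A$, so $s_x=t_x$ holds for every $x\in B$ as well; that is exactly $\mathbf{s}=_B\mathbf{t}$. This discharges the assumption and establishes the implication, hence $G\vDash A\rhd B$.

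There is essentially no obstacle here — this is the base case of the soundness proof and reduces entirely to the set-theoretic fact that a subset relation is monotone with respect to pointwise agreement of tuples. No properties of Nash equilibria, pay-off functions, or the dependency graph are needed; the statement would hold for any set of tuples in place of $NE(G)$. Below I write out this short argument.

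\begin{proof}
Consider any two Nash equilibria $\mathbf{s}=\langle s_v\rangle_{v\in V}$ and $\mathbf{t}=\langle t_v\rangle_{v\in V}$ in $NE(G)$ such that $\mathbf{s}=_A\mathbf{t}$. By Definition~\ref{true}, it suffices to show that $\mathbf{s}=_B\mathbf{t}$. Let $x$ be any element of $B$. Since $B\subseteq A$, we have $x\in A$, and thus $s_x=t_x$ by the assumption $\mathbf{s}=_A\mathbf{t}$. Since $x$ was an arbitrary element of $B$, we conclude that $\mathbf{s}=_B\mathbf{t}$.
\end{proof}
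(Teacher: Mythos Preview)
Your proof is correct and follows exactly the same approach as the paper's own one-line argument: take two equilibria agreeing on $A$ and use $B\subseteq A$ to conclude they agree on $B$. In fact, your version is slightly more careful, since the paper's proof contains a typo (it writes ``because $A\subseteq B$'' when it means $B\subseteq A$).
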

\begin{proof}
For any ${\mathbf s},{\mathbf t}\in NE(G)$, if ${\mathbf s}=_A {\mathbf t}$, then ${\mathbf s}=_B {\mathbf t}$ because $A\subseteq B$.
\end{proof}

\begin{lemma}[augmentation]\label{} 
If $G\vDash A\rhd B$, then $G\vDash A,C\rhd B,C$ for each game $G$ over a graph $\Gamma=(V,E)$
and each $A,B,C\subseteq V$.
\end{lemma}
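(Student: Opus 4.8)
The plan is to prove the soundness of the Augmentation axiom directly from Definition~\ref{true}, unwinding what $G\vDash A,C\rhd B,C$ means and reducing it to the hypothesis $G\vDash A\rhd B$. So first I would fix an arbitrary game $G$ over $\Gamma=(V,E)$, arbitrary $A,B,C\subseteq V$, and assume $G\vDash A\rhd B$, i.e.\ for all ${\mathbf s},{\mathbf t}\in NE(G)$, ${\mathbf s}=_A{\mathbf t}$ implies ${\mathbf s}=_B{\mathbf t}$.

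Next I would take arbitrary ${\mathbf s},{\mathbf t}\in NE(G)$ and assume ${\mathbf s}=_{A\cup C}{\mathbf t}$; the goal is to show ${\mathbf s}=_{B\cup C}{\mathbf t}$. The key (completely routine) observation is that $=_X$ is monotone in $X$: if $X\subseteq Y$ then ${\mathbf s}=_Y{\mathbf t}$ implies ${\mathbf s}=_X{\mathbf t}$, simply because $=_X$ asks agreement on a subset of the coordinates that $=_Y$ asks agreement on. From ${\mathbf s}=_{A\cup C}{\mathbf t}$ I would therefore extract both ${\mathbf s}=_A{\mathbf t}$ and ${\mathbf s}=_C{\mathbf t}$. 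Applying the hypothesis to ${\mathbf s}=_A{\mathbf t}$ gives ${\mathbf s}=_B{\mathbf t}$. Finally, since ${\mathbf s}$ and ${\mathbf t}$ agree on every coordinate in $B$ and on every coordinate in $C$, they agree on every coordinate in $B\cup C$, i.e.\ ${\mathbf s}=_{B\cup C}{\mathbf t}$, which is exactly what is needed. By Definition~\ref{true}(ii) this yields $G\vDash A,C\rhd B,C$.

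There is essentially no obstacle here: the entire argument is a bookkeeping exercise about the coordinatewise-agreement relation $=_X$ and its behaviour under unions, together with a single application of the assumed dependence. The only thing to be slightly careful about is not to invoke anything about the graph structure or the pay-off functions — Augmentation is a purely combinatorial property of the relation and holds for every game, Nash equilibria or not. I would keep the write-up to two or three sentences mirroring the style of the Reflexivity lemma's proof above.
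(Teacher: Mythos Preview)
Your proposal is correct and follows exactly the same approach as the paper's proof: assume $G\vDash A\rhd B$, take ${\mathbf s},{\mathbf t}\in NE(G)$ with ${\mathbf s}=_{A,C}{\mathbf t}$, split this into ${\mathbf s}=_A{\mathbf t}$ and ${\mathbf s}=_C{\mathbf t}$, apply the hypothesis to obtain ${\mathbf s}=_B{\mathbf t}$, and recombine to get ${\mathbf s}=_{B,C}{\mathbf t}$. The paper's write-up is indeed just the two or three sentences you anticipate.
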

\begin{proof}
Suppose that $G\vDash A\rhd B$ and consider any ${\mathbf s},{\mathbf t}\in NE(G)$ such that ${\mathbf s}=_{A,C}{\mathbf t}$. We will show that ${\mathbf s}=_{B,C}{\mathbf t}$. Indeed, ${\mathbf s}=_{A,C}{\mathbf t}$ implies that ${\mathbf s}=_{A}{\mathbf t}$ and ${\mathbf s}=_{C}{\mathbf t}$. Thus, ${\mathbf s}=_{B}{\mathbf t}$ by the assumption $G\vDash A\rhd B$. Therefore, ${\mathbf s}=_{B,C}{\mathbf t}$.
\end{proof}

\begin{lemma}[transitivity]\label{}
If $G\vDash A\rhd B$ and $G\vDash B\rhd C$, then $G\vDash A\rhd C$ for each game $G$ over a graph $\Gamma=(V,E)$
and each $A,B,C\subseteq V$.
\end{lemma}
\begin{proof}
Suppose that $G\vDash A\rhd B$ and $G\vDash B\rhd C$. Consider any ${\mathbf s},{\mathbf t}\in NE(G)$ such that ${\mathbf s}=_{A}{\mathbf t}$. We will show that ${\mathbf s}=_{C}{\mathbf t}$. Indeed, ${\mathbf s}=_{B}{\mathbf t}$ due to the first assumption. Hence, by the second assumption, ${\mathbf s}=_{C}{\mathbf t}$.
\end{proof}

\begin{lemma}[contiguity]\label{}
If $G\vDash A,B\rhd C$, then $G\vDash {\cal B}(S),{\cal B}(T),B\rhd C$,
for each game $G=(V,E)$ over a graph $\Gamma$, each cut $(U,W)$ of $\Gamma$, and each $A\subseteq U$, $B\subseteq V$, and $C\subseteq W$.
\end{lemma}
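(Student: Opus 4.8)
The plan is to establish the contrapositive-free implication directly. Assume $G\vDash A,B\rhd C$, fix arbitrary ${\mathbf s},{\mathbf t}\in NE(G)$ with ${\mathbf s}=_{{\cal B}(U),{\cal B}(W),B}{\mathbf t}$, and prove ${\mathbf s}=_C{\mathbf t}$. The central device is to splice the two equilibria along the cut: define a hybrid profile ${\mathbf r}$ by $r_v=t_v$ for $v\in U$ and $r_v=s_v$ for $v\in W$. The whole argument rests on one claim: ${\mathbf r}\in NE(G)$.

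To verify that ${\mathbf r}$ is a Nash equilibrium, I would treat the two sides of the cut separately. For a player $v\in W$ we have $r_v=s_v$, and I would show ${\mathbf r}=_{Adj^+(v)}{\mathbf s}$: on $Adj^+(v)\cap W$ this holds by definition of ${\mathbf r}$, while each $w\in Adj^+(v)\cap U$ lies in ${\cal B}(U)$ (it is in $U$ and adjacent to $v\in V\setminus U$), so $r_w=t_w=s_w$ by the hypothesis ${\mathbf s}=_{{\cal B}(U)}{\mathbf t}$. Since $u_v$ depends only on the strategies in $Adj^+(v)$, any unilateral deviation of $v$ away from ${\mathbf r}$ yields exactly the same pay-off as the corresponding deviation away from ${\mathbf s}$, hence cannot be profitable because ${\mathbf s}\in NE(G)$. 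The symmetric argument, with $U$, $W$, ${\mathbf s}$, ${\mathbf t}$ interchanged and using ${\mathbf s}=_{{\cal B}(W)}{\mathbf t}$ together with ${\mathbf t}\in NE(G)$, handles every player $v\in U$. Therefore ${\mathbf r}\in NE(G)$.

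The rest is an equality chase. Because $A\subseteq U$, we have ${\mathbf r}=_A{\mathbf t}$; and ${\mathbf r}=_B{\mathbf t}$, since ${\mathbf r}$ agrees with ${\mathbf t}$ on $B\cap U$ by construction and on $B\cap W$ via the hypothesis ${\mathbf s}=_B{\mathbf t}$. Thus ${\mathbf r}=_{A,B}{\mathbf t}$, and the assumption $G\vDash A,B\rhd C$ gives ${\mathbf r}=_C{\mathbf t}$. Finally $C\subseteq W$ yields ${\mathbf r}=_C{\mathbf s}$, so ${\mathbf s}=_C{\mathbf t}$, which is what we needed.

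The main obstacle — and the only point where the graph structure is actually used — is the claim that the spliced profile ${\mathbf r}$ is again a Nash equilibrium; everything hinges on the observation that a $W$-player's pay-off is shielded from the $U$-side of the cut except through vertices of ${\cal B}(U)$, where ${\mathbf s}$ and ${\mathbf t}$ already coincide by hypothesis. This is precisely why the ``double layer'' divider ${\cal B}(U),{\cal B}(W)$, rather than a single layer, is required.
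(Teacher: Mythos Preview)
Your proof is correct and follows essentially the same approach as the paper: splice the two equilibria along the cut, verify that the hybrid profile is again a Nash equilibrium using the fact that neighbours across the cut lie in ${\cal B}(U)$ or ${\cal B}(W)$ where ${\mathbf s}$ and ${\mathbf t}$ agree, and then apply $G\vDash A,B\rhd C$ to the hybrid together with one of the originals. The only difference is cosmetic---you place ${\mathbf t}$ on the $U$-side and ${\mathbf s}$ on the $W$-side, whereas the paper does the reverse---and your verification of the equilibrium property is spelled out more carefully than the paper's.
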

\begin{proof}
Suppose that $G\vDash A,B\rhd C$. Consider any ${\mathbf s}=\langle s_v\rangle_{v\in V}\in NE(G)$ and ${\mathbf t}=\langle t_v\rangle_{v\in V}\in NE(G)$ such that ${\mathbf s}=_{{\cal B}(U),{\cal B}(W),B}{\mathbf t}$. We will prove that ${\mathbf s}=_C {\mathbf t}$. Indeed, consider strategy profile ${\mathbf e}=\langle e_v\rangle_{v\in V}$ such that 
$$
e_v=
\left\{
\begin{array}{ll}
s_v   & \mbox{if $v\in U$,}  \\
t_v  & \mbox{if $v\in W$}.  
\end{array}
\right.
$$ 
We will first prove that ${\mathbf e}\in NE(G)$. Assuming the opposite, let $v\in V$ be a player in the game $G$ that can increase his pay-off by changing strategy in profile ${\mathbf e}$. Without loss of generality, let $v\in U$. Then, ${\mathbf e}=_{Adj(v)\cup{\{v\}}}{\mathbf s}$. Thus, player $v$  can also increase his pay-off by changing strategy in profile ${\mathbf s}$, which is a contradiction with the choice of ${\mathbf s}\in NE(G)$.

Note that ${\mathbf e}=_{U,B}{\mathbf s}$ and ${\mathbf e}=_{W,B}{\mathbf t}$. Thus, ${\mathbf e}=_{A,B}{\mathbf s}$ and ${\mathbf e}=_{C}{\mathbf s}$. Hence, ${\mathbf e}=_{C}{\mathbf s}$ by the assumption $G\vDash A,B\rhd C$. Therefore, ${\mathbf s}=_{C}{\mathbf e} =_{C}{\mathbf t}$.
\end{proof}

\section{Conclusion}

In this paper, we have described a sound and complete logical system for functional dependence in strategic games over a fixed dependency graph. The dependency graph puts restrictions on the type of pay-off functions that can be used in the game. If no such restrictions are imposed, then the logical system for functional dependence in strategic games is just the set of original Armstrong axioms. This statement follows from our results since the absence of restrictions corresponds to the case of a complete (in the graph theory sense) dependency graph. In the case of a complete graph, the Contiguity axiom follows from the Armstrong axioms because for any cut $(U,W)$, the set ${\cal B}(U)\cup{\cal B}(W)$ is the set of all vertices in the graph.

\bibliography{../sp}

\begin{thebibliography}{10}
\providecommand{\bibitemdeclare}[2]{}
\providecommand{\surnamestart}{}
\providecommand{\surnameend}{}
\providecommand{\urlprefix}{Available at }
\providecommand{\url}[1]{\texttt{#1}}
\providecommand{\href}[2]{\texttt{#2}}
\providecommand{\urlalt}[2]{\href{#1}{#2}}
\providecommand{\doi}[1]{doi:\urlalt{http://dx.doi.org/#1}{#1}}
\providecommand{\bibinfo}[2]{#2}

\bibitemdeclare{incollection}{a74}
\bibitem{a74}
\bibinfo{author}{W.~W. \surnamestart Armstrong\surnameend}
  (\bibinfo{year}{1974}): \emph{\bibinfo{title}{Dependency structures of data
  base relationships}}.
\newblock In: {\sl \bibinfo{booktitle}{Information processing 74 ({P}roc.
  {IFIP} {C}ongress, {S}tockholm, 1974)}}, \bibinfo{publisher}{North-Holland},
  \bibinfo{address}{Amsterdam}, pp. \bibinfo{pages}{580--583}.

\bibitemdeclare{inproceedings}{bfh77}
\bibitem{bfh77}
\bibinfo{author}{Catriel \surnamestart Beeri\surnameend},
  \bibinfo{author}{Ronald \surnamestart Fagin\surnameend} \&
  \bibinfo{author}{John~H. \surnamestart Howard\surnameend}
  (\bibinfo{year}{1977}): \emph{\bibinfo{title}{A complete axiomatization for
  functional and multivalued dependencies in database relations}}.
\newblock In: {\sl \bibinfo{booktitle}{SIGMOD '77: Proceedings of the 1977 ACM
  SIGMOD international conference on Management of data}},
  \bibinfo{publisher}{ACM}, \bibinfo{address}{New York, NY, USA}, pp.
  \bibinfo{pages}{47--61}, \doi{10.1145/509404.509414}.

\bibitemdeclare{article}{egg06eccc}
\bibitem{egg06eccc}
\bibinfo{author}{Edith \surnamestart Elkind\surnameend},
  \bibinfo{author}{Leslie~Ann \surnamestart Goldberg\surnameend} \&
  \bibinfo{author}{Paul~W. \surnamestart Goldberg\surnameend}
  (\bibinfo{year}{2006}): \emph{\bibinfo{title}{Nash Equilibria in Graphical
  Games on Trees Revisited}}.
\newblock {\sl \bibinfo{journal}{Electronic Colloquium on Computational
  Complexity (ECCC)}} (\bibinfo{number}{005}).

\bibitemdeclare{inproceedings}{egg07ec}
\bibitem{egg07ec}
\bibinfo{author}{Edith \surnamestart Elkind\surnameend},
  \bibinfo{author}{Leslie~Ann \surnamestart Goldberg\surnameend} \&
  \bibinfo{author}{Paul~W. \surnamestart Goldberg\surnameend}
  (\bibinfo{year}{2007}): \emph{\bibinfo{title}{Computing good {N}ash
  equilibria in graphical games}}.
\newblock In \bibinfo{editor}{Jeffrey~K. \surnamestart
  MacKie-Mason\surnameend}, \bibinfo{editor}{David~C. \surnamestart
  Parkes\surnameend} \& \bibinfo{editor}{Paul \surnamestart
  Resnick\surnameend}, editors: {\sl \bibinfo{booktitle}{ACM Conference on
  Electronic Commerce}}, \bibinfo{publisher}{ACM}, pp.
  \bibinfo{pages}{162--171}, \doi{10.1145/1250910.1250935}.

\bibitemdeclare{book}{guw09}
\bibitem{guw09}
\bibinfo{author}{Hector \surnamestart Garcia-Molina\surnameend},
  \bibinfo{author}{Jeffrey \surnamestart Ullman\surnameend} \&
  \bibinfo{author}{Jennifer \surnamestart Widom\surnameend}
  (\bibinfo{year}{2009}): \emph{\bibinfo{title}{Database Systems: The Complete
  Book}}, \bibinfo{edition}{second} edition.
\newblock \bibinfo{publisher}{Prentice-Hall}.

\bibitemdeclare{article}{hn13arxiv}
\bibitem{hn13arxiv}
\bibinfo{author}{Kristine \surnamestart Harjes\surnameend} \&
  \bibinfo{author}{Pavel \surnamestart Naumov\surnameend}
  (\bibinfo{year}{2013}): \emph{\bibinfo{title}{Functional Dependence in
  Strategic Games}}.
\newblock {\sl \bibinfo{journal}{CoRR}} \bibinfo{volume}{arXiv:1302.0447
  [math.LO]}.

\bibitemdeclare{inproceedings}{kls01uai}
\bibitem{kls01uai}
\bibinfo{author}{Michael~J. \surnamestart Kearns\surnameend},
  \bibinfo{author}{Michael~L. \surnamestart Littman\surnameend} \&
  \bibinfo{author}{Satinder~P. \surnamestart Singh\surnameend}
  (\bibinfo{year}{2001}): \emph{\bibinfo{title}{Graphical Models for Game
  Theory}}.
\newblock In \bibinfo{editor}{Jack~S. \surnamestart Breese\surnameend} \&
  \bibinfo{editor}{Daphne \surnamestart Koller\surnameend}, editors: {\sl
  \bibinfo{booktitle}{UAI}}, \bibinfo{publisher}{Morgan Kaufmann}, pp.
  \bibinfo{pages}{253--260}.

\bibitemdeclare{inproceedings}{lks01nips}
\bibitem{lks01nips}
\bibinfo{author}{Michael~L. \surnamestart Littman\surnameend},
  \bibinfo{author}{Michael~J. \surnamestart Kearns\surnameend} \&
  \bibinfo{author}{Satinder~P. \surnamestart Singh\surnameend}
  (\bibinfo{year}{2001}): \emph{\bibinfo{title}{An Efficient, Exact Algorithm
  for Solving Tree-Structured Graphical Games}}.
\newblock In \bibinfo{editor}{Thomas~G. \surnamestart Dietterich\surnameend},
  \bibinfo{editor}{Suzanna \surnamestart Becker\surnameend} \&
  \bibinfo{editor}{Zoubin \surnamestart Ghahramani\surnameend}, editors: {\sl
  \bibinfo{booktitle}{NIPS}}, \bibinfo{publisher}{MIT Press}, pp.
  \bibinfo{pages}{817--823}.

\bibitemdeclare{inproceedings}{mn11clima}
\bibitem{mn11clima}
\bibinfo{author}{Sara~Miner \surnamestart More\surnameend} \&
  \bibinfo{author}{Pavel \surnamestart Naumov\surnameend}
  (\bibinfo{year}{2011}): \emph{\bibinfo{title}{The Functional Dependence
  Relation on Hypergraphs of Secrets}}.
\newblock In \bibinfo{editor}{Jo{\~a}o \surnamestart Leite\surnameend},
  \bibinfo{editor}{Paolo \surnamestart Torroni\surnameend},
  \bibinfo{editor}{Thomas \surnamestart {\AA}gotnes\surnameend},
  \bibinfo{editor}{Guido \surnamestart Boella\surnameend} \&
  \bibinfo{editor}{Leon \surnamestart van~der Torre\surnameend}, editors: {\sl
  \bibinfo{booktitle}{CLIMA}}, {\sl \bibinfo{series}{Lecture Notes in Computer
  Science}} \bibinfo{volume}{6814}, \bibinfo{publisher}{Springer}, pp.
  \bibinfo{pages}{29--40}, \doi{10.1007/978-3-642-22359-4\_3}.

\bibitemdeclare{inproceedings}{nn12loft}
\bibitem{nn12loft}
\bibinfo{author}{Pavel \surnamestart Naumov\surnameend} \&
  \bibinfo{author}{Brittany \surnamestart Nicholls\surnameend}
  (\bibinfo{year}{2012}): \emph{\bibinfo{title}{Rationally Functional
  Dependence}}.
\newblock In: {\sl \bibinfo{booktitle}{10th Conference on Logic and the
  Foundations of Game and Decision Theory (LOFT)}}.

\end{thebibliography}

\end{document}